\newtheorem{prop}{Proposition}
\newtheorem{theo}{Theorem}
\newtheorem{ass}{Assumption}
\newtheorem{lem}{Lemma}
\title{\LARGE \bf
Statistically Consistent Approximate Model Predictive Control
}
\author{E. Milios$^{1,2}$, K. P. Wabersich$^{1}$, F. Berkel$^{1}$, F. Gruber$^{1}$, and M. N. Zeilinger$^{2}$ % <-this % stops a space
\thanks{$^{1}$ Elias Milios, Kim P. Wabersich, Felix Berkel, and Felix Gruber are with the Corporate Research of Robert Bosch GmbH, Renningen, Germany.
        {\tt\small elias.milios, kimpeter.wabersich, felix.berkel, felix.gruber@de.bosch.com}}%
\thanks{$^{2}$Elias Milios and Melanie N. Zeilinger are with the Institute of Dynamic Systems and Control, ETH Zurich, Zurich, Switzerland.
        {\tt\small emilios,mzeilinger@ethz.ch}}%
}
\begin{document}

\maketitle
\thispagestyle{empty}
\pagestyle{empty}

%%%%%%%%%%%%%%%%%%%%%%%%%%%%%%%%%%%%%%%%%%%%%%%%%%%%%%%%%%%%%%%%%%%%%%%%%%%%%%%%
\begin{abstract}
    Model Predictive Control (MPC) offers rigorous safety and performance guarantees but is computationally intensive. Approximate MPC (AMPC) aims to circumvent this drawback by learning a computationally cheaper surrogate policy. Common approaches focus on imitation learning (IL) via behavioral cloning (BC)\ifthenelse{\boolean{extendedversion}}{, minimizing a mean-squared-error loss on a collection of state-input pairs}. However, BC fundamentally fails to provide accurate approximations when MPC solutions are set-valued due to non-convex constraints or local minima. We propose a two-stage IL procedure to accurately approximate nonlinear, potentially set-valued MPC policies. The method integrates an approximation of the MPC's optimal value function into a one-step look-ahead loss function, and thereby embeds the MPC's constraint and performance objectives into the IL objective. \ifthenelse{\boolean{extendedversion}}{This is achieved by adopting a stabilizing soft constrained MPC formulation, which reflects constraint violations in the optimal value function by combining a constraint tightening with slack penalties.}{} We prove statistical consistency for policies that exactly minimize our IL objective, implying convergence to a safe and stabilizing control law, and establish input-to-state stability guarantees for approximate minimizers. Simulations demonstrate improved performance compared to BC.
\end{abstract}

%\begin{IEEEkeywords}
%
%\end{IEEEkeywords}

%%%%%%%%%%%%%%%%%%%%%%%%%%%%%%%%%%%%%%%%%%%%%%%%%%%%%%%%%%%%%%%%%%%%%%%%%%%%%%%%
\section{Introduction}
MPC is a powerful control framework for modern control systems, as it provides nearly optimal policies which ensure closed-loop stability while satisfying state and input constraints. Yet, its widespread industrial deployment is hindered by the need for online optimization, which demands substantial computational resources and significantly complicates controller certification for real-time deployment. To address this, AMPC seeks to approximate the MPC policy with a surrogate that is tailored for the use in embedded devices. To this end, a popular approach is to learn a neural network (NN) by minimizing a mean-squared-error (MSE) loss on a collection of sampled state-input pairs, referred to as BC. While proving to be effective in many settings~\cite{AMPC_survey_Gonzales}, BC-based approaches show systematic errors when presented with set-valued MPC solutions~\cite{Bishop_mixture_density_networks}, which can arise from over-actuated systems, non-convexities, or local minima in nonlinear MPC~\cite{Diffusion_AMPC}. Resulting approximation errors compromise the original safety and stability guarantees, potentially leading to unsafe and poorly performing closed-loop behaviors.\par
To this end, several works focus on recovering system-theoretic guarantees despite approximation errors. In~\cite{AMPC_Hertneck}, an input-robust MPC is utilized to account for potential approximation errors. However, the work remains isolated from the learning procedure and assumes a sufficiently small approximation error can be achieved, thereby neglecting the shortcomings of BC. Instead of approximating the MPC policy, the authors in~\cite{AMPC_Nicolas} circumvent the drawbacks of BC by approximating the unique optimal value function of a tailored soft constrained MPC (SCMPC) formulation. This approximate value function is used to formulate a one-step look-ahead objective, returning an approximate MPC policy through its online minimization. While this enables the recovery of stability and constraint satisfaction guarantees, the required online optimization can still be limiting, especially for high dimensional input spaces. Finally, works such as~\cite{Drgona_DiffPC} and~\cite{AMPC_Ghezzi} aim to avoid the drawbacks of BC through MPC-inspired loss functions. Specifically,~\cite{Drgona_DiffPC} takes a Lagrangian relaxation of the MPC problem as a loss function, preserving structural characteristics of the MPC to guide the learning of the approximate policy. However, relying on soft penalties involves careful tuning to balance performance, safety, and stability. To circumvent this, the work in~\cite{AMPC_Ghezzi} directly embeds the MPC problem as the IL loss function, evaluating the cost of a candidate control input by solving the original MPC problem with the candidate input used for the initial prediction step. This approach preserves the MPC's exact performance and constraint characteristics, however, it induces significant computational overhead due to repeated MPC solving during the training process.\par
%

%Finally, works such as~\cite{Diffusion_AMPC} and~\cite{Mixture_Gaussian_AMPC} specifically deal with approximating set-valued MPC policies. Conceptually, these works differ from the present work by aiming to recover the full set-valued mapping, rather than one element thereof. Additionally, while appearing promising when applied in real-world settings~\cite{Diffusion_AMPC}, the lack of system-theoretic guarantees may limit the applicability of these approaches in safety-critical environments.
%
\textit{Contribution:}
This work proposes an IL procedure designed to mimic nonlinear, potentially set-valued MPC policies. Our approach provides theoretical guarantees for both the learning process and the resulting policy, while accounting for practical limitations such as finite data and imperfect optimization. Specifically, our contributions are:\par
1) Statistical consistency: Any policy that minimizes the expected value of the proposed loss over a user-defined data set will converge to a safe and stabilizing optimal control law with arbitrary accuracy as the number of training samples increases.\par
2) Stability guarantees: Input-to-state stability (ISS)~\cite{ISS_Sontag} is guaranteed even for policies that only approximately minimize the proposed IL objective on finite data, reflecting realistic conditions such as early stopping and limited hyperparameter tuning.\par
The proposed IL procedure consists of two stages, conceptually merging established AMPC approaches. At the first stage, motivated by~\cite{AMPC_Nicolas}, we approximate the optimal value function of a stabilizing SCMPC formulation. At the second stage, building on the idea of employing MPC-inspired loss functions~\cite{Drgona_DiffPC, AMPC_Ghezzi}, we use the approximate value function to construct a differentiable and fast to evaluate loss function that inherits the original MPC objectives. This enables efficient training of an explicit approximate MPC policy and, consequently, extends the approaches in~\cite{AMPC_Nicolas} and~\cite{AMPC_Ghezzi} by reducing the computational complexity of online policy evaluation and offline policy training, respectively. Moreover, our statistical consistency guarantees remain applicable to the setting considered in~\cite{AMPC_Ghezzi}, thereby providing an additional, complementary extension to~\cite{AMPC_Ghezzi}.
\par
\ifthenelse{\boolean{extendedversion}}{\textit{Outline:}
\cref{sec: problem formulation} introduces the problem formulation. \cref{sec: approximation procedure} describes the proposed IL procedure and \cref{sec: theoretical analysis} provides its theoretical analysis. Specifically,~\cref{sec: statistical properties} derives statistical consistency of policies that exactly minimize the proposed IL objective, and~\cref{sec: stability} establishes ISS for policies that achieve only approximate minimization. Finally, \cref{sec: numerical experiments} concludes with the numerical experiments.}{}

\textit{Notation:} \ifthenelse{\boolean{extendedversion}}{For two vectors $a \in \mathbb{R}^n$ and $b \in \mathbb{R}^m$, the ordered pair $(a,b) \in \mathbb{R}^{n+m}$ represents their concatenation.}{} The symbol $\mathbb{1}$ denotes a vector of ones of appropriate dimension. For \ifthenelse{\boolean{extendedversion}}{}{a vector $a \in \mathbb{R}^n$ and} sets $\mathcal{A},\mathcal{B} \subset \mathbb{R}^n$ we denote by $\mathbb{d}(a,\mathcal{A}):=\inf_{a'\in \mathcal{A}}\lVert a - a'\rVert$ the distance from $a$ to $\mathcal{A}$, and by $\mathbb{D}(\mathcal{A},\mathcal{B}):=\sup_{a \in \mathcal{A}}\mathbb{d}(a,\mathcal{B})$ the deviation from $\mathcal{A}$ to $\mathcal{B}$. Given a function $f: \mathbb{R}^n \rightarrow \mathbb{R}^m$ and a collection of vectors $\mathcal{C}=\{a_i\}_{i=1}^{N_\mathcal{C}}$ with $a_i \in \mathbb{R}^n$ and $N_\mathcal{C}\geq1$, we define the empirical expectation of $f(a)$ over $\mathcal{C}$ by $\mathbb{E}_\mathcal{C}[f(a)]:=N_\mathcal{C}^{-1}\sum_{i=1}^{N_\mathcal{C}}f(a_i)$. Similarly, given a probability distribution $\mathcal{P}$ with probability density function $p(a)$ and support $\mathcal{S}\subseteq \mathbb{R}^n$, we denote the expected value of $f(a)$ when $a$ is sampled from $\mathcal{P}$ on the domain $\mathcal{S}$ by $\mathbb{E}_{\mathcal{S}}[f(a)] = \int_{\mathcal{S}}f(a)p(a)da$. A continuous function $\alpha: [0,s) \rightarrow \mathbb{R}_{\geq 0}$ is called a $\mathcal{K}$-function if it is strictly increasing and $\alpha(0) = 0$. A continuous function $\beta:[0,s) \times [0,\infty) \rightarrow [0,\infty)$ is called a $\mathcal{KL}$-function if for each fixed $s$, $\beta(r,s)$ is a $\mathcal{K}$-function and for each fixed $r$, $\beta(r,s)$ is decreasing with respect to $s$ and $\beta(r,s) \rightarrow 0$ as $s \rightarrow \infty$. 
\section{Preliminaries}\label{sec: problem formulation}
We consider nonlinear discrete-time systems of the form
\begin{equation}\label{eq:nonlinear_system}
    x(k+1) = f(x(k),u(k)),
\end{equation}
where $f:\mathbb{R}^{n_x} \times \mathbb{R}^{n_u} \rightarrow \mathbb{R}^{n_x}$ denotes the system dynamics, $x(k) \in \mathbb{R}^{n_x}$ and $u(k) \in \mathbb{R}^{n_u}$ the system state and control input, respectively, and $k \in \mathbb{N}$ the discrete time index. We assume that $f$ is differentiable and satisfies $f(0,0) = 0$. Furthermore, we consider polytopic state-input constraints of the form $x(k) \in \mathbb{X}:=\{x \in \mathbb{R}^{n_x} \mid H_xx\leq \mathbb{1}\}$, and $u(k) \in \mathbb{U}:=\{u \in \mathbb{R}^{n_u} \mid H_uu \leq \mathbb{1}\}$, where $H_x \in \mathbb{R}^{m_x \times n_x}$, $H_u \in \mathbb{R}^{m_u \times n_u}$.
\subsection{Soft Constrained MPC}
Given system~\eqref{eq:nonlinear_system}, we formulate an MPC problem that facilitates its approximation using IL. To this end, we adopt the following SCMPC problem~\cite{AMPC_Nicolas}:
\begin{subequations}\label{eq: generic MPC}
\begin{align}
    V_\mathrm{MPC}(x(k)) & :=  \min_{u_{\cdot|k}, \xi_{\cdot|k},\alpha} \ J_\mathrm{p}(x(k),u_{\cdot|k}) + J_\mathrm{\xi}(\xi_{\cdot|k})\label{eq: generic MPC cost}\\
    \text{s.t. } \  & x_{0|k} = x(k),\label{eq: initial state constraint}\\
    & x_{N|k} \in \alpha \mathbb{X}_\mathrm{f}, \quad 0 \leq \alpha \leq 1, \quad \xi_{N|k} \geq 0,\label{eq: term constraint} \\
    & \alpha \mathbb{X}_\mathrm{f} \subseteq \{x \mid H_xx \leq \mathbb{1}(1-\eta)+\xi_{N|k}\},\label{eq: scaled terminal set}\\
    & \forall i \in \{0,1,\ldots,N-1\}:\notag \\
    & \quad x_{i+1|k} = f(x_{i|k},u_{i|k}), \\
    & \quad u_{i|k} \in \mathbb{U}, \quad \xi_{i|k}\geq 0, \\
    & \quad H_xx_{i|k} \leq \mathbb{1}(1-\eta) + \xi_{i|k} + \xi_{N|k}. \label{eq: slacked state constraints}
\end{align}
\end{subequations}
Here, we optimize over a predicted input, state, and slack variable sequence $u_{\cdot|k} = \{u_{i|k}\}_{i=0}^{N-1}$, $x_{\cdot|k} = \{x_{i|k}\}_{i=0}^{N}$, and $\xi_{\cdot|k} = \{\xi_{i|k}\}_{i=0}^{N}$ with $\xi_{i|k} \in \mathbb{R}^{m_x}$ and $i$ denoting the prediction time step.
Akin to a common MPC formulation, in \eqref{eq: generic MPC cost} we consider the performance cost $J_\mathrm{p}(x_{\cdot|k},u_{\cdot|k}) := \sum_{i=0}^{N-1} \ell(x_{i|k},u_{i|k}) + V_\mathrm{f}(x_{N|k})$ with stage cost $\ell: \mathbb{R}^{n_x} \times \mathbb{R}^{n_u} \rightarrow \mathbb{R}_{\geq 0}$, terminal cost $V_\mathrm{f}:\mathbb{R}^{n_x} \rightarrow \mathbb{R}_{\geq 0}$, and prediction horizon $N$. In addition,~\eqref{eq: generic MPC cost} contains the penalty cost $J_\xi(\xi_{\cdot|k}):= \rho\ell_\xi(\xi_{N|k})+\rho\sum_{i=0}^{N-1}\ell_\xi(\xi_{N|k}+\xi_{i|k})$ with penalty function $\ell_\xi:\mathbb{R}^{m_x} \rightarrow\mathbb{R}_{\geq 0}$ and scaling factor $\rho > 0$, introducing large cost values in the case of state constraint violations. Ideally, as noted in~\cite[Sec. 3]{AMPC_Nicolas}, $\ell_\xi$ represents an exact penalty function to recover the corresponding hard-constrained MPC when feasible.
Equation \eqref{eq: term constraint} implements a terminal constraint with ellipsoidal terminal set $\mathbb{X}_{\mathrm{f}}:= \{x \in \mathbb{R}^{n_x} \mid x^\top P x \leq h_\mathrm{f}\}$, $h_\mathrm{f}>0$, $P \succ 0$, and terminal scaling factor $\alpha \in [0,1]$. In contrast to classical MPC, we do not require $\mathbb{X}_\mathrm{f}$ to be contained in $\mathbb{X}$. Instead, with~\eqref{eq: scaled terminal set} we require $\alpha\mathbb{X}_\mathrm{f}$ to be contained in the softened state constraint set determined by the terminal slack variable $\xi_{N|k}$. This plays a key role in establishing asymptotic stability~\cite{AMPC_Nicolas}. The constraint \eqref{eq: scaled terminal set} implements a constraint tightening factor $\eta \in (0,1]$, which is used to prevent actual constraint violations.\par
Formulation~\eqref{eq: generic MPC} offers important advantages over classical hard-constrained alternatives. The soft constraints enable data collection from regions beyond the hard state constraint set without compromising stability, thereby enhancing data efficiency, mitigating data sparsity near constraint boundaries, and explicitly reflecting constraint violations in the value function. Moreover, if $J_\mathrm{P}$ is Lipschitz continuous, ~\eqref{eq: generic MPC} ensures continuity of $V_\mathrm{MPC}$ over bounded sets where the constraint $\alpha \leq 1$ in~\eqref{eq: term constraint} is inactive~\cite[Theorem 1]{AMPC_Nicolas}. This ensures that $V_\mathrm{MPC}$ can be arbitrarily well approximated by common NN architectures and standard regression techniques, which becomes important in \cref{sec: approximation procedure}.\par
We define the set of feasible input sequences to problem~\eqref{eq: generic MPC} by $\mathcal{U}_{N}(x(k)):=\{u_{\cdot|k} \in \mathbb{U}^{n_u \times N} \mid \exists \xi_{\cdot|k}, \alpha \  \mathrm{ s.t. } \ \eqref{eq: initial state constraint}~-~\eqref{eq: slacked state constraints}\}$ and the set of feasible states by $\mathcal{X}_N:=\{x(k) \in \mathbb{R}^{n_x} \mid \mathcal{U}_N(x(k)) \neq \emptyset\}$. Then, for any $x \in \mathcal{X}_N$, we denote an optimal input sequence to~\eqref{eq: generic MPC} by $u_{\cdot|k}^\mathrm{MPC}=\{u^\mathrm{MPC}_{i|k}\}_{i=0}^{N-1}$. Based on this, the set of MPC inputs is defined by $\mathcal{U}_\mathrm{MPC}(x(k)):=\{u^\mathrm{MPC}_{0|k} \mid \exists \{u^\mathrm{MPC}_{i|k}\}_{i=0}^{N-1} \in \mathcal{U}_N(x(k))\}$.
Importantly, we allow~\eqref{eq: generic MPC} to admit multiple optimal inputs for a given state, i.e., we do not assume $\mathcal{U}_\mathrm{MPC}(x)$ to be a singleton. Consequently, the resulting MPC input $\pi_{\mathrm{MPC}}(x) \in \mathcal{U}_\mathrm{MPC}(x)$ can vary based on the solver's initialization and internal mechanics, potentially yielding different solutions for identical or neighboring states. We model this similar to~\cite{Diffusion_AMPC} as a random process, where the solver's selection $\pi_\mathrm{MPC}(x)$ is determined by the conditional probability density function $p_\mathrm{MPC}(u\mid x)$ whose support is given by $\mathcal{U}_\mathrm{MPC}(x)$.\par
We assume that the ingredients of~\eqref{eq: generic MPC} are chosen such that application of $\pi_\mathrm{MPC}(x) \sim p_\mathrm{MPC}(u \mid x)$ ensures stability.
\begin{ass}\label{ass: MPC stability}
    For all $x(k) \in \mathcal{X}_N$ and $k \geq0$, application of any $u(k) = \pi_\mathrm{MPC}(x(k)) \in \mathcal{U_\mathrm{MPC}}(x(k))$ implies asymptotic stability of the origin $(x,u) = (0,0)$ of system~\eqref{eq:nonlinear_system}.
\end{ass}\par
Sufficient conditions for~\cref{ass: MPC stability} are provided, e.g., in~\cite{AMPC_Nicolas}, which are closely aligned with common stabilizing MPC assumptions.\par
\subsection{Problem Formulation and Objectives}
The primary goal of this work is to train a policy $\pi(x, \theta_\pi)$ with parameters $\theta_\pi \in \Theta_\pi$ and compact parameter set $\Theta_\pi \subset \mathbb{R}^{n_p}$ to approximate the behavior of~\eqref{eq: generic MPC}.
To this end, the common approach is naive BC. Conceptually, the goal of BC is to find a policy $\pi(x,\theta_\pi^{\mathrm{BC}})$ satisfying $\pi(x,\theta_\pi^{\mathrm{BC}}) \approx \pi_\mathrm{MPC}(x)$ by trying to match \textit{each} specific MPC input sample $\pi_\mathrm{MPC}(x)$ collected in the training data. Formally, BC returns $\pi(x,\theta_\pi^{\mathrm{BC}})\approx \int_{\mathcal{U}_\mathrm{MPC}(x)} u \: p_\mathrm{MPC}(u \mid x)du$ and, hence, learns a policy that is systematically biased towards the conditional mean of the MPC solutions sampled around a state~\cite{Bishop_mixture_density_networks}. The mean, however, can differ substantially from each element in $\mathcal{U}_\mathrm{MPC}(x)$. Consequently, the learned BC policy can have large distance to the set of MPC solutions, i.e., $\mathbb{d}(\pi(x,\theta_\pi^{\mathrm{BC}}), \mathcal{U}_\mathrm{MPC}(x))$ can be large.
To overcome this drawback, we aim to obtain an approximate policy that is close to \textit{at least one} optimal MPC input. Formally, we aim to learn a policy that, for each $x \in \mathcal{X}_N$, satisfies $\mathbb{d}(\pi(x,\theta_\pi), \mathcal{U}_\mathrm{MPC}(x)) \approx 0$.
Specifically, our objectives are twofold:\par
%%%%%%%%%%%%%%%
1.) Given a data set  $\mathcal{D}_X:=\{x_j\}_{j=1}^{N_s}$ collecting $N_s$ states sampled from a user-defined distribution $\mathcal{P}$ with support $\mathcal{S} \subseteq \mathcal{X}_N$. We aim to develop a loss function $L_\mathrm{MPC}:\mathbb{R}^{n_x} \times \mathbb{R}^{n_u} \rightarrow \mathbb{R}$ such that the minimization of its expected value taken over $\mathcal{D}_X$ returns a \textit{statistically consistent} estimate of the MPC policy. That is, for any optimal \textit{empirical parameter estimate} satisfying
\begin{equation}\label{eq: IL objective}
    \hat{\theta}_\pi \in \Theta^{\mathcal{D}_X}_\pi:= \arg \min_{\theta_\pi \in \Theta_\pi}\mathbb{E}_{\mathcal{D}_X}[L_\mathrm{MPC}(x,\theta_\pi)]
\end{equation}
and any $x \in \mathcal{S}$, the resulting \textit{empirical policy} $\pi(x,\hat{\theta}_\pi)$ satisfies
\begin{equation}\label{eq: consistency}
    \mathbb{d}\left(\pi(x,\hat{\theta}_\pi),\mathcal{U}_\mathrm{MPC}(x)\right ) \xrightarrow{N_s\rightarrow\infty}0.
    \end{equation}\par
2.) Since practical implementations involve a finite data set size $N_s$ and optimization practices like early stopping and limited hyperparameter tuning, we seek to recover stability guarantees for an \textit{approximate policy} $ \pi(x,\tilde{\theta}_\pi) \approx \pi(x,\hat{\theta}_\pi)$ defined by an \textit{approximate parameter estimate} satisfying
\begin{equation}\label{eq: approximate parameter estimate}
\mathbb{d}\left(\tilde{\theta}_\pi, \Theta^{\mathcal{D}_X}_\pi\right) \approx 0.
\end{equation}\par
Our focus in this work is on NNs as function approximators, though the framework is not limited to them.\par
We assume that the policy NN satisfies the input constraints.
\begin{ass}\label{ass: constraints}
        For all $x \in \mathbb{R}^{n_x}$ and $\theta_\pi \in \Theta_\pi$, it holds that $\pi(x,\theta_\pi) \in \mathbb{U}$.
\end{ass}\par
\cref{ass: constraints} can be satisfied for general polytopic input constraints by, e.g., appending \ifthenelse{\boolean{extendedversion}}{the projection $\min_{u \in \mathbb{U}}  \lVert u - \pi(x,\theta_\pi)\rVert^2$ as a differentiable optimization layer~\cite{OPTNet_Amos} as final layer to the NN.}{a differentiable projection layer~\cite{OPTNet_Amos} as final layer to the NN.}\par
\section{Consistent Imitation Learning for AMPC}\label{sec: approximation procedure}
In this section, we present a tailored IL procedure that enables accurate approximation of~\eqref{eq: generic MPC}. Our focus is on the procedure's mechanism and components. Specifically, \cref{sec: loss function} proposes a tailored loss function, and \cref{sec: IL procedure} subsequently details its integration into a complete IL procedure. The theoretical analysis, where we show that the proposed IL procedure indeed enables statistically consistent approximation of~\eqref{eq: generic MPC} and facilitates the recovery of stability guarantees, is presented in \cref{sec: theoretical analysis}.
\subsection{Loss Function}\label{sec: loss function}
Similar to~\cite{AMPC_Ghezzi}, the main idea is to formulate a loss function that embeds the original intentions of the MPC, and, additionally, remains computationally efficient when optimized via conventional techniques like stochastic gradient descent. 
To this end, we propose the following loss function
\begin{equation}\label{eq: loss function}
    L_\mathrm{MPC}(x,\pi(x,\theta_\pi)) :=\ell(x,\pi(x,\theta_\pi)) + V(f(x,\pi(x,\theta_\pi)),\hat{\theta}_V).
\end{equation}
Equation~\ref{eq: loss function} combines the MPC stage cost $\ell$ and an approximation of the MPC value function $V(x,\hat{\theta}_V) \approx V_\mathrm{MPC}(x)$, and with this the original MPC performance and constraint characteristics, into a computationally tractable loss function.\par %\textcolor{blue}{that integrates smoothly in established machine learning libraries.}\par
The following result provides a system-theoretic justification for ~\eqref{eq: loss function}, establishing ISS for the \textit{optimal} policy
$\pi^*(x)\in \Pi^*(x) :=\arg \min_{u\in \mathbb{U}} L_\mathrm{MPC}(x,u)$
minimizing~\eqref{eq: loss function} point-wise. 
\begin{ass}\label{ass: valfct approximation error} 
    Consider an approximate value function $V(x,\hat{\theta}_V): \mathcal{X}_N\,\cup \,\mathcal{X}_N^+ \times \mathbb{R}^{n_p}\rightarrow \mathbb{R}$ with $\mathcal{X}_N^+:=\{f(x,u)\mid x\in\mathcal{X}_N, \, u\in \mathbb{U} \}$. There exists some $\hat{\varepsilon}_V >0$ such that for all $x \in \mathcal{X}_N$   it holds that $| \varepsilon_V(x) | := | V(x,\hat{\theta}_V) - V_\mathrm{MPC}(x)| \leq \hat{\varepsilon}_V$.
\end{ass}\par
\begin{prop}\label{prop: AOC stability}
    Assume that Assumption~\ref{ass: MPC stability} and~\ref{ass: valfct approximation error} hold. Furthermore, assume that for all $x \in \mathcal{X}_N$ it holds that $f(x,\pi^*(x)) \in \mathcal{X}_N$. 
    Then it follows for all $x(0)\in \mathcal{X}_N$ that application of $u(k) = \pi^*(x(k))$ implies ISS, that is for all $k \geq 0$
    \begin{equation}
        \lVert x(k) \rVert \leq \beta\left(\lVert x(0)\rVert, k\right) + \sigma\left(\sup_{i \in [0,k-1]} | \bar{\varepsilon}_V\left(x(i)\right)|\right)
    \end{equation}
    with $\beta \in \mathcal{KL}$, $\sigma \in \mathcal{K}$, and
    \begin{multline}\label{eq: epsilon hat}
        \bar{\varepsilon}_V(x(k)) := | \varepsilon_V(f(x(k),\pi^*(x(k)))) | \\ + | \varepsilon_V(f(x(k),\pi_\mathrm{MPC}(x(k)))) |.
    \end{multline}
\end{prop}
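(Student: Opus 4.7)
The plan is to use the exact MPC value function $V_\mathrm{MPC}$ as an ISS-Lyapunov function for the closed-loop system driven by $\pi^*$. As a preparatory step, I would leverage \cref{ass: MPC stability} together with the standard SCMPC stability analysis of~\cite{AMPC_Nicolas} to extract the usual Lyapunov ingredients: $\mathcal{K}$-functions $\alpha_1,\alpha_2,\alpha_3$ such that $\alpha_1(\lVert x\rVert)\leq V_\mathrm{MPC}(x)\leq \alpha_2(\lVert x\rVert)$ on $\mathcal{X}_N$ and $\ell(x,u)\geq \alpha_3(\lVert x\rVert)$, together with the closed-loop descent inequality $V_\mathrm{MPC}(f(x,\pi_\mathrm{MPC}(x)))\leq V_\mathrm{MPC}(x)-\ell(x,\pi_\mathrm{MPC}(x))$ valid for every selection $\pi_\mathrm{MPC}(x)\in\mathcal{U}_\mathrm{MPC}(x)$.

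The core step is a comparison argument exploiting the point-wise optimality of $\pi^*$ in $L_\mathrm{MPC}(x,\cdot)$. Since $\pi_\mathrm{MPC}(x)\in\mathbb{U}$ is a feasible candidate, $L_\mathrm{MPC}(x,\pi^*(x))\leq L_\mathrm{MPC}(x,\pi_\mathrm{MPC}(x))$. Substituting the decomposition $V(\cdot,\hat{\theta}_V)=V_\mathrm{MPC}(\cdot)+\varepsilon_V(\cdot)$ on both sides, rearranging, and upper-bounding $\varepsilon_V(f(x,\pi_\mathrm{MPC}(x)))-\varepsilon_V(f(x,\pi^*(x)))$ by $\bar{\varepsilon}_V(x)$ from~\eqref{eq: epsilon hat} yields
\begin{equation}
\ell(x,\pi^*(x))+V_\mathrm{MPC}(f(x,\pi^*(x)))\leq \ell(x,\pi_\mathrm{MPC}(x))+V_\mathrm{MPC}(f(x,\pi_\mathrm{MPC}(x)))+\bar{\varepsilon}_V(x).
\end{equation}
Combining with the MPC descent inequality, canceling $\ell(x,\pi_\mathrm{MPC}(x))$, and dropping the nonnegative $\ell(x,\pi^*(x))$ on the left produces the ISS-Lyapunov decrease
\begin{equation}
V_\mathrm{MPC}(f(x,\pi^*(x)))-V_\mathrm{MPC}(x)\leq -\alpha_3(\lVert x\rVert)+\bar{\varepsilon}_V(x).
\end{equation}

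The assumption $f(x,\pi^*(x))\in\mathcal{X}_N$ then guarantees that the Lyapunov bounds and decrease propagate recursively along the closed-loop trajectory, so a standard ISS-Lyapunov theorem (e.g.,~\cite{ISS_Sontag}) delivers the claimed $\beta\in\mathcal{KL}$ and $\sigma\in\mathcal{K}$. The main obstacle is the first step: asserting that $V_\mathrm{MPC}$ is a genuine Lyapunov function with the listed properties on $\mathcal{X}_N$ is nontrivial for the SCMPC formulation with slacks, tightening, and scalable terminal set, and I would cite the analysis of~\cite{AMPC_Nicolas} rather than redo it. A secondary subtlety is that $\mathcal{U}_\mathrm{MPC}(x)$ may be set-valued, but since \cref{ass: MPC stability} yields the descent inequality for every element, the comparison step goes through for an arbitrary selection $\pi_\mathrm{MPC}(x)$ used in the definition of $\bar{\varepsilon}_V(x)$.
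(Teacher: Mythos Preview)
Your proposal is correct and is precisely the argument that the paper defers to by citing~\cite[Theorem~3]{AMPC_Nicolas}: the comparison step you write out is exactly the content of~\cite[Lemma~2]{AMPC_Nicolas} (invoked later in the paper's \cref{lem: V relation}), and the ISS-Lyapunov conclusion from the resulting decrease inequality is the remainder of that cited theorem. The only imprecision is the phrase ``dropping the nonnegative $\ell(x,\pi^*(x))$''---you actually retain it and lower-bound it by $\alpha_3(\lVert x\rVert)$, which is what your displayed inequality shows.
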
\par
We refer to~\cite[Theorem 3]{AMPC_Nicolas} for the proof.\par
\cref{prop: AOC stability} establishes that by minimizing~\eqref{eq: loss function}, we obtain a policy that guarantees closed-loop stability. 
%
%Moreover, state constraint satisfaction guarantees can be provided assuming a large enough ratio of the slack penalty and the constraint tightening factor $\rho / \eta$~\cite[Theorem 4]{AMPC_Nicolas}. \par
%
\subsection{Imitation Learning Procedure}\label{sec: IL procedure}
Based on~\eqref{eq: loss function}, we propose the two-stage IL procedure summarized in \cref{algo: learning procedure}. % to learn a consistent and stabilizing AMPC policy. 
First, given a user-defined state-value-function data set $\mathcal{D}_V:=\{(x_j,V_\mathrm{MPC}(x_j))\}_{j=1}^{N_s}$ and a value function NN $V(x,\theta_V)$ parameterized by $\theta_V \in \mathbb{R}^{n_p}$, we obtain the approximate value function $V(x,\hat{\theta}_V)$ by applying any suitable regression technique to the data set $\mathcal{D}_V$. Note that, by the continuity and uniqueness of $V_\mathrm{MPC}$, minimizing an MSE-based loss function provides a statistically consistent estimate of $V_\mathrm{MPC}$. Based on this, we use $V(x,\hat{\theta}_V)$ to define~\eqref{eq: loss function} which, together with a user-defined state data set $\mathcal{D}_X$ and policy NN $\pi(x,\theta_\pi)$, defines the IL objective in~\eqref{eq: IL objective}. Minimizing~\eqref{eq: IL objective} then yields the \textit{empirical} policy $\pi(x,\hat{\theta}_\pi)$, which, consequently, approximately minimizes the MPC's objectives, rather than approximately matches the MPC's input mapping.\par
We note that, while we consider an approximation of $V_\mathrm{MPC}$ in~\eqref{eq: loss function} for computational reasons, ideas from~\cite{AMPC_Ghezzi} can be leveraged to implement the exact $V_\mathrm{MPC}$ in~\eqref{eq: loss function}, and, hence, to circumvent approximation errors in~\cref{ass: valfct approximation error}. This, however, introduces significant computational overhead during training, as~\eqref{eq: generic MPC} must be solved for \textit{each} successor state $f(x_j,\pi(x_j,\theta_\pi))$ resulting from \textit{every} $x_j \in \mathcal{D}_X$ in \textit{every} training iteration.\par
\ifthenelse{\boolean{extendedversion}}{
\cref{algo: learning procedure}, similar to the approach in~\cite{AMPC_Ghezzi}, admits conceptual similarities to actor-critic reinforcement learning. Specifically,~\eqref{eq: loss function} can be viewed as an approximate Q-function of the MPC. Consequently, learning $\pi(x,\hat{\theta}_\pi)$ by minimizing~\eqref{eq: loss function} can be interpreted as training the actor NN $\pi(x,\theta_\pi)$ with the Q-function-based critic~\eqref{eq: loss function}.\par}{}
\ifthenelse{\boolean{extendedversion}}{Finally, we note that $\mathcal{D}_X$ and $\mathcal{D}_V$ are not required to collect the same state samples, and that we consider an equal parameter space $\mathbb{R}^{n_p}$ for $\theta_\pi$ and $\theta_V$ for notational simplicity.\par}{}
\begin{algorithm}[t]
    \caption{Proposed learning procedure.}\label{algo: learning procedure}
    \begin{algorithmic}[1]
    \Statex \textbf{Input}: Data sets $\mathcal{D}_V$, $\mathcal{D}_X$, NNs $V(x,\theta_V)$, $\pi(x,\theta_\pi)$. 
    \Statex \textbf{Output}: Empirical policy $\pi(x,\hat{\theta}_\pi)$.\
    
    \State Obtain $V(x,\hat{\theta}_V)$ by solving for
    \begin{equation}
        \hat{\theta}_V \in \arg\min_{\theta_V} \mathbb{E}_{\mathcal{D}_V}[| V(x,\theta_V) - V_\mathrm{MPC}(x)|^2].
    \end{equation}

    \State 
%%%%%%%%%%%%%%%%%%%%%%%%%%%%%%%%%%%%%%%%%%%%
    Construct~\eqref{eq: loss function} and solve for
    \begin{equation}
        \hat{\theta}_\pi \in \arg\min_{\theta_\pi} \mathbb{E}_{\mathcal{D}_X}[L_\mathrm{MPC}(x,\pi(x,\theta_\pi))].
    \end{equation}

    \State Return $ \pi(x,\hat{\theta}_\pi)$.
    \end{algorithmic}
\end{algorithm}\par
\section{Theoretical Analysis}\label{sec: theoretical analysis}
In this section, we analyze the theoretical properties of the IL procedure described in \cref{algo: learning procedure}.
We begin with the statistical consistency properties of the empirical policy $\pi(x,\hat{\theta}_\pi)$, in \cref{sec: statistical properties}. 
We establish that \cref{algo: learning procedure} returns a statistically consistent estimate of the optimal policy $\pi^*(x)$, i.e., ensures that $\pi(x,\hat{\theta}_\pi)$ asymptotically converges to $\Pi^*(x)$ as the number of samples grows to infinity. Based on this, in \cref{sec: stability} we consider the \textit{approximate} policy $\pi(x,\tilde{\theta}_\pi)$, leveraging the results from \cref{sec: statistical properties} to establish ISS guarantees for system~\eqref{eq:nonlinear_system} when controlled by $\pi(x,\tilde{\theta}_\pi)$. \ifthenelse{\boolean{extendedversion}}{See Figure~\ref{fig:theory_schaubild} for a schematic overview of the theoretical connections and results.}{}
\ifthenelse{\boolean{extendedversion}}{
\begin{figure*}[t]
    \centering
    \includegraphics[width=\linewidth]{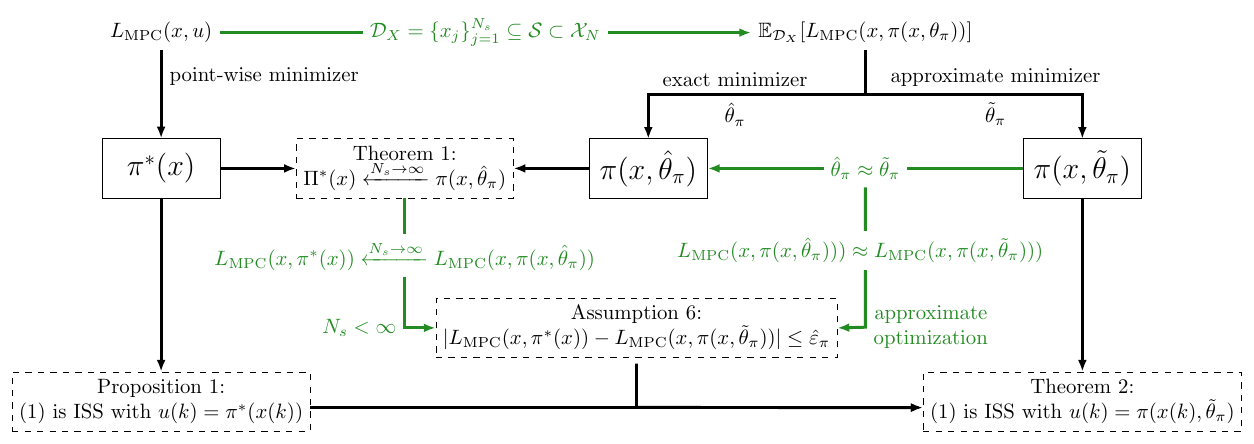} 
    \caption{Relationship (green) of the main theoretical results and definitions (black). Consistency of $\pi(x,\hat{\theta}_\pi)$ (\cref{prop: convergence}) ensures convergence of $\pi(x,\hat{\theta}_\pi)$ to $\Pi^*(x)$ for all $x \in \mathcal{S}$. This implies that $L_\mathrm{MPC}(x,\pi(x,\hat{\theta}_\pi))$ asymptotically approaches the minimal loss $L_\mathrm{MPC}(x,\pi^*(x))$ as the data set size grows. Since furthermore $L_\mathrm{MPC}(x,\pi(x,\hat{\theta}_\pi)) \approx L_\mathrm{MPC}(x,\pi(x,\tilde{\theta}_\pi))$ by the definition of $\pi(x,\tilde{\theta}_\pi)$, this justifies \cref{ass: upper bounds} which accounts for practical conditions like finite data and approximate optimization. Finally, combining \cref{ass: upper bounds} and the ISS guarantees for $\pi^*(x)$ (\cref{prop: AOC stability}) enables us to recover ISS guarantees for $\pi(x,\tilde{\theta}_\pi)$ (\cref{prop: iss}).} 
    \label{fig:theory_schaubild}
\end{figure*}}{}
\subsection{Statistical Consistency}\label{sec: statistical properties}
To establish statistical consistency, we take the following statistical view-point. We consider each sampled state $x_j$ as a random vector drawn from the user-defined probability distribution $\mathcal{P}$ with probability density function $p(x)$. We assume $\mathcal{P}$ is well-defined, meaning $p(x) > 0$ for all $x$ in its compact support $\mathcal{S}$, and $\int_{\mathcal{X}_N}p(x)dx = \int_\mathcal{S} p(x)dx = 1$. The data set $\mathcal{D}_X = \{x_j\}_{j=1}^{N_s}$ is then given as a sequence of independent and identically distributed (iid) random vectors $x_j$, $j = 1, \ldots, N_s$, with length $N_s$. Hence, for each $N_s \in \mathbb{N}$, $\mathcal{D}_X$ can be viewed as a random variable. Consequently, every quantity derived from a measurable mapping acting on $\mathcal{D}_X$ remains a random variable, and depends on $N_s$. Specifically, any empirical policy $\pi(x,\hat{\theta}_\pi)$ with $\hat{\theta}_\pi \in \Theta^{\mathcal{D}_X}_\pi$ is a random variable depending on $N_s$. Our focus is on analyzing the sequence of any $\pi(x,\hat{\theta}_\pi)$ as $N_s$ increases to infinity. For notational clarity, however, we omit the explicit dependence on $N_s$ in the following.
Furthermore, we ensure measurability of $\pi(x,\theta_\pi)$, $L_\mathrm{MPC}$, and related quantities with the following assumption.
\begin{ass}\label{ass: continuity}
        $V(x,\hat{\theta}_V)$ is continuous in~$x$, and $\pi(x,\theta_\pi)$ is continuous in $x$ and $\theta_\pi$. 
\end{ass}\par
\cref{ass: continuity} is satisfied by common NN architectures like feedforward NNs with ReLU activation functions. Nonetheless, requiring continuity of $\pi(x,\theta_\pi)$ with respect to $x$ restricts the scope of this section to settings where at least one continuous policy $\pi^*(x)$ exists. This is the case, e.g., if $J_\mathrm{P}$ is quadratic and \eqref{eq:nonlinear_system} is linear.\par
Finally, we require the policy NN $\pi(x,\theta_\pi)$ to be sufficiently expressive to approximate at least one optimal policy $\pi^*(x)$ over $\mathcal{S}$. To this end, we define the set of \textit{optimal parameters} by
\begin{equation}
\Theta^*_\pi(\mathcal{S}):=\{\theta_\pi \in \Theta_\pi \mid  \forall x \in \mathcal{S}: \pi(x,\theta_\pi) \in \Pi^*(x)\},
\end{equation}
and assume the following.
\begin{ass}\label{ass: existence ideal paramter}
    It holds that $\Theta^*_\pi(\mathcal{S}) \neq \emptyset$.
\end{ass}\par
\begin{theo}\label{prop: convergence}
    Suppose Assumptions~\ref{ass: continuity} and~\ref{ass: existence ideal paramter} hold.
    Then, for all $x \in \mathcal{S}$ and any $\pi(x,\hat{\theta}_\pi)$ with $\hat{\theta}_\pi \in  \Theta^{\mathcal{D}_X}_\pi$ it holds with probability 1 that 
    \begin{equation}\label{eq: convergence result}
        \mathbb{d}(\pi(x,\hat{\theta}_\pi),\Pi^*(x)) \xrightarrow{N_s\rightarrow \infty} 0.
    \end{equation}
\end{theo}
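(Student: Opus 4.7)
The plan is to perform a standard M-estimation consistency argument and then leverage the continuity provided by Assumption~\ref{ass: continuity} together with the strict positivity of the sampling density to upgrade convergence of expected losses into pointwise convergence in the policy output. First, I would introduce the population and empirical losses $J(\theta_\pi) := \mathbb{E}_{\mathcal{S}}[L_\mathrm{MPC}(x,\pi(x,\theta_\pi))]$ and $\hat{J}_{N_s}(\theta_\pi) := \mathbb{E}_{\mathcal{D}_X}[L_\mathrm{MPC}(x,\pi(x,\theta_\pi))]$. Assumption~\ref{ass: continuity}, combined with continuity of $\ell$, $f$, and compactness of $\mathcal{S}$, $\Theta_\pi$, and $\mathbb{U}$, makes $(x,\theta_\pi)\mapsto L_\mathrm{MPC}(x,\pi(x,\theta_\pi))$ bounded and continuous on $\mathcal{S}\times\Theta_\pi$, so a uniform law of large numbers yields $\sup_{\theta_\pi \in \Theta_\pi} |\hat{J}_{N_s}(\theta_\pi) - J(\theta_\pi)| \xrightarrow{N_s\to\infty} 0$ almost surely.

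Second, by Assumption~\ref{ass: existence ideal paramter} any $\theta_\pi^* \in \Theta^*_\pi(\mathcal{S})$ satisfies $J(\theta_\pi^*) = J^* := \mathbb{E}_{\mathcal{S}}[\min_{u\in\mathbb{U}} L_\mathrm{MPC}(x,u)]$, which is the global minimum of $J$ over $\Theta_\pi$. The standard chain $J(\hat{\theta}_\pi) \le \hat{J}_{N_s}(\hat{\theta}_\pi) + o(1) \le \hat{J}_{N_s}(\theta_\pi^*) + o(1) = J^* + o(1)$, using the uniform convergence and the definition of $\hat{\theta}_\pi$, then gives $J(\hat{\theta}_\pi) \to J^*$ almost surely. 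The key step is to characterize the limit set $\Theta^* := \{\theta_\pi \in \Theta_\pi : J(\theta_\pi) = J^*\}$ and show that $\Theta^* = \Theta^*_\pi(\mathcal{S})$. For any $\theta_\pi \in \Theta^*$, the excess-loss integrand $x\mapsto L_\mathrm{MPC}(x,\pi(x,\theta_\pi)) - \min_{u\in\mathbb{U}} L_\mathrm{MPC}(x,u)$ is non-negative, continuous on the compact $\mathcal{S}$ (continuity of the second term in $x$ follows from Berge's maximum theorem applied on the compact $\mathbb{U}$), and integrates to zero against the strictly positive density $p(x)$; hence it vanishes identically on $\mathcal{S}$, giving $\pi(x,\theta_\pi)\in\Pi^*(x)$ for all $x \in \mathcal{S}$ and thus $\theta_\pi \in \Theta^*_\pi(\mathcal{S})$.

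Finally, continuity of $J$ makes $\Theta^*$ closed, hence compact, and the convergence $J(\hat{\theta}_\pi) \to J^*$ together with compactness yields $\mathbb{d}(\hat{\theta}_\pi,\Theta^*) \to 0$ almost surely. Choosing $\theta_{N_s}^* \in \Theta^*$ attaining this distance, uniform continuity of $\pi(x,\cdot)$ on the compact $\Theta_\pi$ (for each fixed $x$) forces $\|\pi(x,\hat{\theta}_\pi) - \pi(x,\theta_{N_s}^*)\| \to 0$, and since $\pi(x,\theta_{N_s}^*) \in \Pi^*(x)$ this establishes~\eqref{eq: convergence result}. The main obstacle I anticipate is precisely the upgrade from "the expected excess loss tends to zero" to the pointwise, for-all-$x$ statement in the theorem: this relies crucially on the strict positivity of $p(x)$ on $\mathcal{S}$ combined with the joint continuity ensured by Assumption~\ref{ass: continuity}, which together allow one to deduce identical equality of continuous non-negative functions from their almost-everywhere equality.
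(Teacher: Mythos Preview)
Your proposal is correct and follows essentially the same route as the paper: a uniform law of large numbers over the compact $\Theta_\pi$, identification of the population minimizers with $\Theta^*_\pi(\mathcal{S})$ via the strict positivity of $p(x)$ and continuity of the excess loss, and then transfer to policy outputs via (uniform) continuity of $\pi$ in $\theta_\pi$. The only cosmetic differences are that the paper cites stochastic-programming theorems (uniform LLN and set-convergence of argmins) where you argue directly, and that the paper compares to $L_\mathrm{MPC}(x,\pi(x,\theta^*_\pi))$ for a fixed $\theta^*_\pi\in\Theta^*_\pi(\mathcal{S})$ rather than invoking Berge's theorem for the pointwise minimum---but since $\pi(x,\theta^*_\pi)\in\Pi^*(x)$ these yield the same comparison function.
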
\par 
\ifthenelse{\boolean{extendedversion}}{See Appendix~\ref{sec: proof convergence} for the proof.\par}
    {\begin{proof}
    In the following, we outline the key steps for proving \cref{prop: convergence}, while the complete proof can be found in~\cite{AMPC_Milios_Statistical_L-CSS_Extended}.  The proof builds upon four main components:\par
    1) \textit{Convergence of objectives:} With probability 1 it holds that
    \begin{multline}\label{eq: uniform convergence of risks}
        \sup_{\theta_\pi \in \Theta_\pi} |\mathbb{E}_{\mathcal{D}_X}[L_\mathrm{MPC}(x,\pi(x,\theta_\pi))] \\ - \mathbb{E}_\mathcal{S}[L_\mathrm{MPC}(x,\pi(x,\theta_\pi))]| \xrightarrow{N_s \rightarrow \infty} 0.
    \end{multline}
    This follows from~\cite[Theorem 7.48]{stochastic_programming}, which applies due to continuity of $L_\mathrm{MPC}$ and compactness of $\mathcal{S}$ implying that $L_\mathrm{MPC}$ is bounded by an integrable function~\cite[Sec. 7.2]{stochastic_programming}.\par
    2) \textit{Convergence of optimal empirical and expected parameters:} With probability 1 it holds that $\mathbb{D}(\Theta^{\mathcal{D}_X}_\pi,\Theta^\mathcal{S}_\pi) \xrightarrow[]{N_s \rightarrow \infty} 0$, where $\Theta^\mathcal{S}_\pi:= \arg \min_{\theta_\pi \in \Theta_\pi} \mathbb{E}_\mathcal{S} [L_\mathrm{MPC}(x,\pi(x,\theta_\pi))]$. This follows from~\cite[Theorem 5.3]{stochastic_programming}, which applies due to \cref{ass: existence ideal paramter}, compactness of $\Theta_\pi$, and since by~\cite[Theorem 7.48]{stochastic_programming} $\mathbb{E}_\mathcal{S}[L_\mathrm{MPC}(x,\pi(x,\theta_\pi))]$ is finite valued and continuous on $\Theta_\pi$.\par
    3) \textit{Equivalence of expected and optimal parameters:} It holds that $\Theta^*_\pi(\mathcal{S}) = \Theta^\mathcal{S}_\pi$, implying that $\mathbb{D}(\Theta^{\mathcal{D}_X}_\pi,\Theta^*_\pi(\mathcal{S}))\xrightarrow[]{N_s \rightarrow \infty} 0$ with probability 1. This follows by monotonicity of the $\mathbb{E}$ operator together with the fact that $p(x)>0$ for all $x \in \mathcal{S}$.\par
    4) \textit{Convergence of learned policies:} Finally, \eqref{eq: convergence result} follows from 3.) due to uniform continuity of $\pi(x,\theta_\pi)$, which concludes the proof.
    \end{proof}\par}
We note that \cref{prop: convergence} generally holds when \cref{ass: continuity} and~\ref{ass: existence ideal paramter} are satisfied and consequently applies beyond the SCMPC formulation in~\eqref{eq: generic MPC}.
\subsection{Input-to-State Stability}\label{sec: stability}
In this section, we are interested in recovering ISS guarantees for an approximate policy $\pi(x,\tilde{\theta}_\pi) \approx \pi(x,\hat{\theta}_\pi)$.
As a first step, 
\ifthenelse{\boolean{extendedversion}}{we relate the loss obtained by $\pi(x,\tilde{\theta}_\pi)$ with the loss obtained by the optimal policy $\pi^*(x)$.}{we assume the following.}
\begin{ass}\label{ass: upper bounds}
    Consider an approximate policy $\pi(x,\tilde{\theta}_\pi)$ with $\tilde{\theta}_\pi$ satisfying~\eqref{eq: approximate parameter estimate}. We assume that there exists $\hat{\varepsilon}_\pi >0$ such that for all $x \in \mathcal{X}_N$ and any $\pi^*(x) \in \Pi^*(x)$ it holds that
    \begin{equation}    
        \ | \varepsilon_\pi(x) | := | L_\mathrm{MPC}(x,\pi(x,\tilde{\theta}_\pi)) - L_\mathrm{MPC}(x,\pi^*(x)) | \leq \hat{\varepsilon}_\pi.
    \end{equation}\par
\end{ass}\par
\cref{ass: upper bounds}, and also \cref{ass: valfct approximation error}, can be verified, e.g., using the approach in~\cite[Sec. IV]{AMPC_Hertneck}. Importantly, \cref{prop: convergence} guarantees that~\cref{ass: upper bounds} can, in principle, be satisfied for arbitrary small $\hat{\varepsilon}_\pi$. Specifically, it ensures that there exists a data set size $N_s$ such that for all $\hat{\theta}_\pi \in \Theta^{\mathcal{D}_X}_\pi$, $\pi^*(x) \in \Pi^*(x)$, and $x \in \mathcal{S}$, the resulting empirical policy $\pi(x,\hat{\theta}_\pi)$ satisfies $| L_\mathrm{MPC}(x,\pi(x,\hat{\theta}_\pi))- L_\mathrm{MPC}(x,\pi^*(x))| \leq \hat{\varepsilon}_\pi$ with probability~1. 
Consequently, \cref{ass: upper bounds} translates into requiring $N_s$ to be large enough and $\tilde{\theta}_\pi$ to be a good approximation of $\hat{\theta}_\pi$.\par
Based on this, we can relate the loss of $\pi(x,\tilde{\theta}_\pi)$ to the cost resulting from the original MPC policy.
\begin{lem}\label{lem: V relation}
    Suppose the conditions from \cref{prop: AOC stability} and Assumption~\ref{ass: upper bounds} hold. Assume that $\pi(x,\tilde{\theta}_\pi)$ is such that for all $x \in \mathcal{X}_N$ it holds that $f(x,\pi(x,\tilde{\theta}_\pi)) \in \mathcal{X}_N$. Then, for all $x \in \mathcal{X}_N$, it holds that
    \begin{multline}\label{eq: loss mpc cost relation}
        L_\mathrm{MPC}(x,\pi(x,\tilde{\theta}_\pi)) \\ \leq \ell(x,\pi_\mathrm{MPC}(x))  + V_\mathrm{MPC}(f(x,\pi_\mathrm{MPC}(x))) \\ + | \varepsilon_V(f(x,\pi_\mathrm{MPC}(x))) |  + | \varepsilon_\pi(x)|.
    \end{multline}
\end{lem}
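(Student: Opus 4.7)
The plan is to prove the inequality by chaining three one-step inequalities, each corresponding to a single approximation/suboptimality gap, so that the two error terms $|\varepsilon_\pi(x)|$ and $|\varepsilon_V(f(x,\pi_\mathrm{MPC}(x)))|$ arise cleanly at the end.

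First, I would use Assumption~\ref{ass: upper bounds} directly to bound the loss of the approximate policy by the optimal loss plus the policy approximation error:
\begin{equation*}
L_\mathrm{MPC}(x,\pi(x,\tilde{\theta}_\pi)) \leq L_\mathrm{MPC}(x,\pi^*(x)) + |\varepsilon_\pi(x)|.
\end{equation*}
Second, I would exploit the pointwise optimality of $\pi^*(x)$ over $\mathbb{U}$. Since $\pi_\mathrm{MPC}(x) \in \mathcal{U}_\mathrm{MPC}(x) \subseteq \mathbb{U}$, it is a feasible candidate in the defining minimization of $\pi^*(x)$, so
\begin{equation*}
L_\mathrm{MPC}(x,\pi^*(x)) \leq L_\mathrm{MPC}(x,\pi_\mathrm{MPC}(x)) = \ell(x,\pi_\mathrm{MPC}(x)) + V(f(x,\pi_\mathrm{MPC}(x)),\hat{\theta}_V).
\end{equation*}
Third, I would swap the approximate value function for the true one using Assumption~\ref{ass: valfct approximation error}, which applies since $f(x,\pi_\mathrm{MPC}(x)) \in \mathcal{X}_N$ by the recursive feasibility guaranteed through Assumption~\ref{ass: MPC stability}:
\begin{equation*}
V(f(x,\pi_\mathrm{MPC}(x)),\hat{\theta}_V) \leq V_\mathrm{MPC}(f(x,\pi_\mathrm{MPC}(x))) + |\varepsilon_V(f(x,\pi_\mathrm{MPC}(x)))|.
\end{equation*}
Chaining these three inequalities yields exactly~\eqref{eq: loss mpc cost relation}.

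There is no serious obstacle; the argument is essentially a tight bookkeeping exercise. The only subtlety to verify carefully is that each quantity in the chain is well defined: $\pi_\mathrm{MPC}(x) \in \mathbb{U}$ so it is admissible in the minimization defining $\pi^*(x)$; $f(x,\pi_\mathrm{MPC}(x)) \in \mathcal{X}_N \subseteq \mathcal{X}_N \cup \mathcal{X}_N^+$ so $V(\cdot,\hat{\theta}_V)$ and $V_\mathrm{MPC}(\cdot)$ are both defined at that point, and Assumption~\ref{ass: valfct approximation error} applies. Note that the condition $f(x,\pi(x,\tilde{\theta}_\pi)) \in \mathcal{X}_N$ stated in the lemma is not used in this particular inequality itself; it is carried through from Proposition~\ref{prop: AOC stability} as a standing hypothesis to ensure that the bound in~\eqref{eq: loss mpc cost relation} is meaningful for the subsequent ISS argument on $\pi(x,\tilde{\theta}_\pi)$.
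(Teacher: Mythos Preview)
Your proof is correct and follows essentially the same approach as the paper's: first bound via Assumption~\ref{ass: upper bounds}, then bound $L_\mathrm{MPC}(x,\pi^*(x))$ in terms of the MPC quantities. The only cosmetic difference is that the paper invokes \cite[Lemma~2]{AMPC_Nicolas} for the second step, whereas you unpack that lemma explicitly by combining the pointwise optimality of $\pi^*(x)$ over $\mathbb{U}$ with Assumption~\ref{ass: valfct approximation error}; your version is therefore a self-contained rendering of the same argument.
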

\begin{proof}
    By Assumption~\ref{ass: upper bounds}, for any $\pi^*(x) \in \Pi^*(x)$ and all $x \in \mathcal{X}_N$, we get that
    $L_\mathrm{MPC}(x,\pi(x,\tilde{\theta}_\pi)) \leq L_\mathrm{MPC}(x,\pi^*(x)) + | \varepsilon_\pi(x)|$.
    Furthermore, based on \cref{prop: AOC stability}, we can apply \cite[Lemma 2]{AMPC_Nicolas} to obtain
    \begin{multline}\label{eq: V relation}
        L_\mathrm{MPC}(x,\pi^*(x)) \leq \ell(x,\pi_\mathrm{MPC}(x)) + \\ V_\mathrm{MPC}(f(x,\pi_\mathrm{MPC}(x))) + |\varepsilon_V(f(x,\pi_\mathrm{MPC}(x)))|
    \end{multline}
    holding for all $x \in \mathcal{X}_N$ and any $\pi^*(x) \in \Pi^*(x)$. Consequently, we obtain~\eqref{eq: loss mpc cost relation} by combining~\eqref{eq: V relation} with the result above, which concludes the proof.
\end{proof}\par
With this, we can establish ISS, ensuring stability despite approximation errors.
\begin{theo}\label{prop: iss}
    Suppose the conditions from Lemma~\ref{lem: V relation} hold. Then it follows that for all $x(0) \in \mathcal{X}_N$ application of $u(k) = \pi(x(k),\tilde{\theta}_\pi)$ implies ISS with respect to $\bar{\varepsilon}_\pi(x(k)) :=\bar{\varepsilon}_V(x(k)) + |\varepsilon_\pi(x(k))|$.
\end{theo}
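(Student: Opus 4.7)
The plan is to exhibit $V_\mathrm{MPC}$ as an ISS-Lyapunov function along the closed loop generated by $u(k) = \pi(x(k), \tilde{\theta}_\pi)$, mirroring the strategy used to prove \cref{prop: AOC stability} (i.e.~\cite[Theorem~3]{AMPC_Nicolas}), but now accounting for the additional policy-side error $|\varepsilon_\pi(x)|$ introduced by \cref{ass: upper bounds}.

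First, at an arbitrary $x = x(k) \in \mathcal{X}_N$, I would expand $L_\mathrm{MPC}(x,\pi(x,\tilde{\theta}_\pi)) = \ell(x,\pi(x,\tilde{\theta}_\pi)) + V(f(x,\pi(x,\tilde{\theta}_\pi)),\hat{\theta}_V)$ on the left-hand side of~\eqref{eq: loss mpc cost relation} and bound its right-hand side using the MPC's dynamic programming inequality $\ell(x,\pi_\mathrm{MPC}(x)) + V_\mathrm{MPC}(f(x,\pi_\mathrm{MPC}(x))) \leq V_\mathrm{MPC}(x)$, which is available under \cref{ass: MPC stability}. Replacing $V(\cdot,\hat{\theta}_V)$ on the LHS by $V_\mathrm{MPC}(\cdot)$ through \cref{ass: valfct approximation error} then yields the Lyapunov descent inequality
\[
V_\mathrm{MPC}(f(x,\pi(x,\tilde{\theta}_\pi))) - V_\mathrm{MPC}(x) \leq -\ell(x,\pi(x,\tilde{\theta}_\pi)) + \bar{\varepsilon}_\pi(x),
\]
with $\bar{\varepsilon}_\pi$ as defined in the theorem (the $\pi^*$-dependent term of $\bar{\varepsilon}_V$ being read along the actual closed-loop trajectory, and in any case uniformly majorised by $2\hat{\varepsilon}_V$ under \cref{ass: valfct approximation error}).

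Second, I would combine this descent with the standard $\mathcal{K}$-type upper and lower bounds on $V_\mathrm{MPC}$ and $\ell$ inherited from the stabilizing SCMPC design established in~\cite{AMPC_Nicolas}, and invoke the standard ISS-Lyapunov characterization from~\cite{ISS_Sontag} to produce $\beta \in \mathcal{KL}$ and $\sigma \in \mathcal{K}$ satisfying $\lVert x(k)\rVert \leq \beta(\lVert x(0)\rVert, k) + \sigma(\sup_{i \in [0,k-1]}|\bar{\varepsilon}_\pi(x(i))|)$, which is precisely the ISS claim.

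The main obstacle will be bookkeeping rather than conceptual: keeping the two sources of error ($|\varepsilon_V|$ evaluated at the two relevant successor states and $|\varepsilon_\pi|$) aligned with the definition of $\bar{\varepsilon}_\pi$ in the theorem, and importing the $\mathcal{K}$-bounds on $V_\mathrm{MPC}$ from the SCMPC analysis in~\cite{AMPC_Nicolas}, which rely on continuity of $V_\mathrm{MPC}$ and its local quadratic behaviour around the origin once the terminal scaling constraint $\alpha \leq 1$ becomes inactive. With these bounds in hand, the remaining steps reduce to a direct combination of \cref{lem: V relation}, \cref{ass: valfct approximation error}, and the DP principle of~\eqref{eq: generic MPC}.
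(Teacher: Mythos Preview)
Your proposal is correct and follows essentially the same route as the paper. The paper's proof is deliberately terse: it records invariance of $\mathcal{X}_N$ under $\pi(\cdot,\tilde{\theta}_\pi)$, notes that $\bar{\varepsilon}_\pi$ takes values in the compact set $[0,2\hat{\varepsilon}_V+\hat{\varepsilon}_\pi]$, and then defers to \cref{lem: V relation} together with ``the arguments used in \cite[Theorem~3]{AMPC_Nicolas}''---which are precisely the ISS-Lyapunov steps (descent via the DP inequality, replacement of $V(\cdot,\hat{\theta}_V)$ by $V_\mathrm{MPC}$ through \cref{ass: valfct approximation error}, and the $\mathcal{K}$-bounds on $V_\mathrm{MPC}$ and $\ell$) that you have spelled out explicitly. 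Your remark that the $\pi^*$-dependent term in $\bar{\varepsilon}_V$ must be read along the actual closed-loop successor $f(x,\pi(x,\tilde{\theta}_\pi))$, and is in any case dominated by $\hat{\varepsilon}_V$, correctly handles the only notational subtlety.
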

\begin{proof}
    By assumption, the set $\mathcal{X}_N$ is invariant under the application of $\pi(x,\tilde{\theta}_\pi)$. Furthermore, by Assumption \ref{ass: upper bounds}, we have that $\bar{\varepsilon}_\pi$ lies in the compact set $\bar{\varepsilon}_\pi \in [0,2\hat{\varepsilon}_V+\hat{\varepsilon}_\pi]$ which contains the origin. This together with Lemma~\ref{lem: V relation} enables to apply the arguments used in \cite[Theorem 3]{AMPC_Nicolas}, establishing the desired result.
\end{proof}\par
\ifthenelse{\boolean{extendedversion}}{
An important consequence of Proposition~\ref{prop: iss} is that if no approximation errors are obtained, i.e., $V(x,\hat{\theta}_V)= V_\mathrm{MPC}(x)$ and $\pi(x,\tilde{\theta}_\pi) = \pi^*(x)$, asymptotic stability of the origin of system~\eqref{eq:nonlinear_system} is recovered. To this end, the approach of~\cite{AMPC_Ghezzi} can be leveraged to circumvent value function approximation errors.\par}{}
We note that state constraint satisfaction guarantees under application of $\pi(x,\tilde{\theta}_\pi)$ can be established as well, using similar arguments as in~\cite[Theorem 4]{AMPC_Nicolas}, and additionally assuming that $\hat{\varepsilon}_\pi \leq \hat{\varepsilon}_V$.\par

\section{Numerical Experiments}\label{sec: numerical experiments}
In this section, we demonstrate the effectiveness of our approach with an academic example in \cref{sec: motivating example}, and a 2D robot obstacle avoidance scenario in \cref{sec: robot obstacle avoidance example}.
\subsection{Illustrative Example}\label{sec: motivating example}
\begin{figure}[b]
    \centering
    \includegraphics[width=\linewidth]{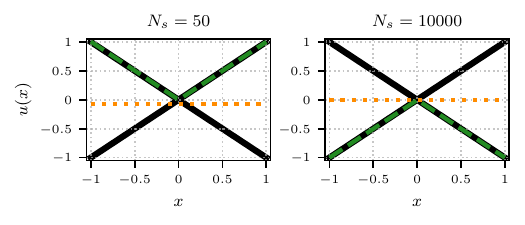}
    \caption{Comparison of the set-valued MPC mapping $\Pi^*(x)$ (solid black) with the trained BC policy $\pi(x,\theta^\mathrm{BC}_\pi)$ (dotted orange) and the approximate policy $\pi(x,\tilde{\theta}_\pi)$ (dashed green) trained with \cref{algo: learning procedure}.}
    \label{fig:example}
\end{figure}
 We consider the following MPC problem~\cite{AMPC_Cao}:
\begin{equation}\label{eq: example1}
    \min_{u_{0|k}}  \sum_{i=0}^{1} x_{i|k}^2 \ \text{s.t. }  \  x_{0|k} = x(k), \  x_{1|k} = x_{0|k}^2-u_{0|k}^2.
\end{equation}
For any state $x \in \mathbb{R}$, the MPC input is given by $\pi_\mathrm{MPC}(x) \in \Pi^*(x) = \{x,-x\}$.
We generate two state-input data sets $\mathcal{D}_\pi:=\{(x_j,\pi_\mathrm{MPC}(x_j))\}_{j=1}^{N_s}$ with $N_s=50$ and $N_s= 10000$, respectively, by sampling states uniformly from the interval $[-1,1]$ and solving~\eqref{eq: example1} for each sampled state. The solver initialization is chosen randomly between $\{-x_j,x_j\}$.\par
We construct~\eqref{eq: loss function} by employing the exact MPC value function $V(x,\hat{\theta}_V) = V_\mathrm{MPC}(x) = x^2$ for simplicity. The policy NN $\pi(x,\theta_\pi)$ is defined by one linear hidden layer of two neurons and ReLU activation functions, which is sufficient to replicate all continuous selections of $\pi_\mathrm{MPC}$. We obtain two approximate policies $ \pi(x,\theta^\mathrm{BC}_\pi)$ and $\pi(x,\tilde{\theta}_\pi)$, where we apply BC for $\theta^\mathrm{BC}_\pi \approx \arg \min_{\theta_\pi} \mathbb{E}_{\mathcal{D}_\pi} \lVert \pi(x,\theta_\pi)-\pi_\mathrm{MPC}(x)\rVert^2$, and \cref{algo: learning procedure} for $\tilde{\theta}_\pi\approx \arg \min_{\theta_\pi} \mathbb{E}_{\mathcal{D}_\pi}[L_\mathrm{MPC}(x,\pi(x,\theta_\pi))]$.
\ifthenelse{\boolean{extendedversion}}{
In both cases, we deploy stochastic gradient descent on mini batches over 2000 epochs using the ADAM optimizer~\cite{ADAM}.}
{}\par 
\cref{fig:example} compares the original MPC mapping with both approximate policies. Despite the NN's potential to represent the MPC, the BC policy $\pi(x,\theta^\mathrm{BC}_\pi)$ deviates considerably from the set-valued MPC control law. In fact, as expected, we empirically validate that for each $x$, the BC policy returns $\pi(x,\theta^\mathrm{BC}_\pi) \approx \int_{\Pi^*(x)} u \: p_\mathrm{MPC} ( u \mid x) d u = 0$, where $p_{\mathrm{MPC}}(u  \mid x)\approx 0.5$ for each $u \in \Pi^*(x)$ and zero elsewhere. Consequently, for any interval $x \in I=[a,b]$ with $a\leq b$, as $N_s\rightarrow \infty$, we observe that $\mathbb{d}(\pi(x,\theta^\mathrm{BC}_\pi), \Pi^*(x)) \rightarrow x$ and $\mathbb{E}_{I}[\mathbb{d}(\pi(x,\theta^\mathrm{BC}_\pi),\Pi^*(x))] \rightarrow \frac{b^2-a^2}{2}$.\par
In contrast, the approximate policy $\pi(x,\tilde{\theta}_\pi)$ closely aligns with the MPC. Furthermore, while $\mathbb{d}(\pi(x,\tilde{\theta}_\pi),{\Pi^*(x)})$ is small even for $N_s=50$, we find that, as the number of state samples increases, the approximate policy converges to one of the MPC policies, i.e., $\mathbb{d}(\pi(x,\tilde{\theta}_\pi),{\Pi^*(x)}) \xrightarrow{N_s\rightarrow \infty} 0$.
\subsection{2D Robot Obstacle Avoidance}\label{sec: robot obstacle avoidance example}
\begin{figure}[t]
    \centering
    \includegraphics[width=\linewidth]{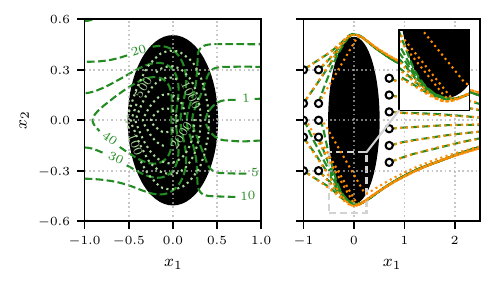}
    \caption{\textbf{Left}: Sub-level sets of $V(x,\hat{\theta}_{V}^\mathrm{p})$ (dashed dark green) and $V(x,\hat{\theta}_{V}^\xi)$ (dotted light green). \textbf{Right:} Closed-loop trajectories resulting from $\pi(x,\theta^\mathrm{BC}_\pi)$ (dotted orange) and $\pi(x,\tilde{\theta}_\pi)$ (dashed green), respectively, with the starting points indicated by white circles. The obstacle is illustrated by the black area. }
    \label{fig: 2d_robot}
\end{figure}
We consider a simplified unicycle model defined by
$x(k+1) = x(k) + 0.05 \begin{bmatrix} \cos(u(k))\\sin(u(k))\end{bmatrix},$
simulating, e.g., a robot moving in the 2-D plane with constant speed. Here, $x = (x_1,x_2) \in \mathbb{R}^2$ denotes the robot's position, and $u \in [-\frac{\pi}{3}, \frac{\pi}{3}]$ directly controls its orientation. The control objective is to circumvent a circular obstacle $\mathcal{O}:=\{x\in \mathbb{R}^2 \mid x^\top x\leq 0.5^2\}$,
while converging to the center line $\mathcal{C}:=\{x=(x_1,x_2) \in \mathbb{R}^2 \mid x_2=0\}$.\par
We formulate the MPC problem according to~\eqref{eq: generic MPC}, with~\eqref{eq: slacked state constraints} adapted to  $(x_{i|k})^\top x_{i|k} \geq 0.5^2+\eta + \xi_{i|k}+\xi_{N|k}$, and~\eqref{eq: term constraint}~-~\eqref{eq: scaled terminal set} removed. We choose $\ell(x,u) = \lVert x\rVert^2_Q+\lVert u \rVert_R^2$, $V_\mathrm{f}(x)= \lVert x \Vert^2_{Q_N}$, with $Q = \left[\begin{smallmatrix} 0 & 0\\ 0& 1 \end{smallmatrix} \right]$, $R = 5$, $Q_N = 100Q$, and $\ell_\xi(\xi)=\lVert \xi\rVert_1$, $\rho = 15000$, and $\eta = 0.01$.
\ifthenelse{\boolean{extendedversion}}{
We define a coarse two-dimensional grid within the interval $x_1 \in [-2,2]$ and $x_2 \in [-1.5,1.5]$ using 41 equidistant points along each axis, and a refined grid within the interval $x_1 \in [-1.5,1.5]$ and $x_2 \in [-0.5,0.5]$ using 51 and 31 equidistant points, respectively. Based on this grid, we generate a data set $\mathcal{D}:=\{(x_j,\pi_\mathrm{MPC}(x_j),V_\mathrm{MPC}(x_j)=V_\mathrm{MPC}^{\mathrm{p}}(x_j)+V_\mathrm{MPC}^\xi(x_j))\}_{j=1}^{N_s}$ consisting of $N_s=3262$ samples in total.}{We obtain a data set $\mathcal{D}:=\{(x_j,\pi_\mathrm{MPC}(x_j),V_\mathrm{MPC}(x_j)=V_\mathrm{MPC}^{\mathrm{p}}(x_j)+V_\mathrm{MPC}^\xi(x_j))\}_{j=1}^{N_s}$ via gridding.}
We obtain
$V(x,\hat{\theta}_V):=\max(0,V(x,\hat{\theta}_\mathrm{V}^\mathrm{p})) + \max(0,V(x,\hat{\theta}_{V}^\xi)) \approx V_\mathrm{MPC}(x)$
and $\pi_\mathrm{BC}$
with
$\hat{\theta}_{V}^\mathrm{p}\approx \arg\min_{\theta_V} \mathbb{E}_{\mathcal{D}}[\lVert V(x,\theta_V)-V^\mathrm{p}_\mathrm{MPC}(x)\rVert^2]$,
$\hat{\theta}_{V}^\mathrm{\xi}\approx \arg\min_{\theta_V} \mathbb{E}_{\mathcal{D}}[\lVert V(x,\theta_\mathrm{V})-V^\mathrm{\xi}_\mathrm{MPC}(x)\rVert^2]$,
and $\theta_\pi^\mathrm{BC} \approx \arg \min_{\theta_\pi} \mathbb{E}_{\mathcal{D}} \lVert \pi(x,\theta_\pi) - \pi_\mathrm{MPC}(x) \rVert ^2$, using the method in~\cite[Sec. 4.3]{AMPC_Nicolas} for efficient value function approximation. We estimate an absolute/relative value function approximation error bound $\hat{\varepsilon}_V \approx \max_{x \in \mathcal{D}}\varepsilon_V(x)$ of $ 93.7 /0.7$. Based on this, we apply \cref{algo: learning procedure} to obtain $\pi(x,\tilde{\theta}_\pi)$ with $\tilde{\theta}_\pi\approx \arg \min_{\theta_\pi} \mathbb{E}_{\mathcal{D}}[L_\mathrm{MPC}(x,\pi(x,\theta_\pi))]$. 
\ifthenelse{\boolean{extendedversion}}{
All the considered NNs consist of three hidden ReLU layers with 128 neurons each. We use stochastic gradient descent for the minimization, where we deploy the ADAM optimizer~\cite{ADAM} for 2000 epochs and apply gridsearch, varying the learning and the exponential learning decay rate. The considered parameter grids are $[5\times 10^{-5},10^{-4}, 5\times 10^{-4}, 10^{-3}, 5\times 10^{-3}, 10^{-2}]$ for the learning and $[0.9995, 0.999, 0.995, 0.99]$ for the exponential decay rate.\par
}{
We use stochastic gradient descent for the minimization, and all NNs consist of three hidden ReLU layers with 128 neurons each. For the detailed learning setup and MPC implementation see~\cite{AMPC_Milios_Statistical_L-CSS_Extended}.\par
}
\cref{fig: 2d_robot} shows the learned performance and constraint value functions $V(x,\hat{\theta}_{V}^\mathrm{p})$ and $V(x;\hat{\theta}_{V}^\xi)$, respectively, and 16 closed-loop trajectories resulting from $\pi(x,\tilde{\theta}_\pi)$ and $\pi(x,\theta^\mathrm{BC}_\pi)$. $V(x,\hat{\theta}_{V}^\mathrm{p})$ successfully represents the MPC's performance characteristics, while $V(x,\hat{\theta}_{V}^\xi)$ enables constraint-aware learning of $\pi(x,\tilde{\theta}_\pi)$ by successfully embedding constraint information through high cost values within $\mathcal{O}$. Importantly, the SCMPC's ability to gather data within $\mathcal{O}$ is essential for this. We observe that $\pi(x,\tilde{\theta}_\pi)$ indeed successfully circumvents $\mathcal{O}$ and converges to $\mathcal{C}$, while the BC policy $\pi_\mathrm{BC}$ leads to significant constraint violations.\par
In \cref{tab:comparison}, we compare the average performance and policy evaluation time, and the number of constraint violations obtained by $\pi(x,\tilde{\theta}_\pi)$, $\pi(x,\theta^\mathrm{BC}_\pi)$, and the optimal policy $\pi^*(x)$ over $500$ trajectories of length $100$, where the initial conditions are sampled uniformly in the interval $x_1 \in [-1,0]$, $x_2 \in [-0.7,0.7]$ and discarded if $x \in \mathcal{O}$. \ifthenelse{\boolean{extendedversion}}{Note that this may still include initial states close to $\mathcal{O}$ for which state constraint violations are unavoidable.}{} Evaluation of $\pi^*(x)$ is done using a uniform input gridding with 100 points. The performance of a trajectory is defined as the sum of a tracking part $p_\mathrm{t}=\sum_{k=0}^{99}\ell(x(k),u(k))$ and a state constraint part $p_\mathrm{c}=\sum_{k=0}^{99}\max(0,V(x(k),\hat{\theta}^\xi_V))$. The number of constraint violations is the total number of visited states that lie within $\mathcal{O}$. We estimate an absolute/relative policy approximation error bound $\hat{\varepsilon}_\pi \approx \max_{x \in \mathcal{D}} \varepsilon_\pi(x)$ of $144.1/0.4$.
\ifthenelse{\boolean{extendedversion}}{Our setup included an Intel Core i7-13850HX CPU, 30 GB of RAM, running Ubuntu 22.04.}{}
\ifthenelse{\boolean{extendedversion}}{We observe that $\pi^*(x)$ achieves the best performance with little constraint violations. The low constraint part $p_\mathrm{c}$ of the performance metric indicates that the constraint violations, while higher in number compared to $\pi(x,\tilde{\theta}_\pi)$, are less severe. Nevertheless, $\pi(x,\theta^\mathrm{BC}_\pi)$ approximates the behavior of $\pi^*(x)$ well, achieving similar performance and number of constraint violations. In contrast, $\pi(x,\theta^\mathrm{BC}_\pi)$ results in poor performance and significant constraint violations. Notably, both $\pi(x,\tilde{\theta}_\pi)$ and $\pi(x,\theta^\mathrm{BC}_\pi)$ incur substantially smaller evaluation time compared to $\pi^*(x)$.}
{We observe that $\pi(x,\theta^\mathrm{BC}_\pi)$ admits poor performance and significant constraint violations. Conversely, both policies $\pi(x,\tilde{\theta}_\pi)$ and $\pi^*(x)$ achieve good performance with little constraint violations. However, $\pi^*(x)$ incurs substantially higher evaluation time.}
\begin{table}[t]
\centering
\caption{Quantitative comparison of the policies.}
\label{tab:comparison}
\begin{tabular}{@{} l c c c @{}} % Policy column (left), P, C, T (centered)
\toprule
Policy & Perf. $(p_\mathrm{t}+p_\mathrm{c})$ & Eval. Time (s) & Nb. Constr. Viol. \\
\midrule
$\pi(x,\theta^\mathrm{BC}_\pi)$ & $3.6 \: (0.21+3.4)$ & \num{3.8e-6} & 861 \\
$\pi(x,\tilde{\theta}_\pi)$ & $0.31 \: (0.23 + 0.09)$ & \num{3.8e-6} & 108 \\
$\pi^*(x)$ & $0.24 \: (0.22 + 0.02)$ & \num{2.5e-3} & 185 \\
\bottomrule
\end{tabular}
\end{table}

\bibliographystyle{IEEEtran}
\bibliography{References}

\ifthenelse{\boolean{extendedversion}}{\appendix 
\subsection{Proof of \cref{prop: convergence}}\label{sec: proof convergence}}{}
\ifthenelse{\boolean{extendedversion}}{To ensure a valid statistical argumentation, we first establish measurability of $\pi(x,\theta_\pi)$, the \textit{empirical} IL objective $\mathbb{E}_{\mathcal{D}_X}[L_\mathrm{MPC}(x,\pi(x,\theta_\pi))]$, and the set of optimal empirical parameter estimates $\Theta^{\mathcal{D}_X}_\pi$. Furthermore, we establish that the \textit{expected} IL objective $\mathbb{E}_\mathcal{S}[L_\mathrm{MPC}(x,\pi(x,\theta_\pi))]$ is finite valued and well-defined. Based on this, we prove Proposition~\ref{prop: convergence} by linking $\Theta^{\mathcal{D}_X}_\pi$ with $\Theta^*_\pi(\mathcal{S})$ via the set of expected parameter estimates $\Theta^\mathcal{S}_\pi:= \arg \min_{\theta_\pi \in \Theta_\pi} \mathbb{E}_\mathcal{S} [L_\mathrm{MPC}(x,\pi(x,\theta_\pi))]$. This we do in 5 steps:
    \begin{enumerate}
        \item First, we show that the empirical IL objective converges to the expected IL objective uniformly in $\theta_\pi$. That is, with probability 1 it holds that
        \begin{multline}\label{eq: uniform convergence of risks}
            \sup_{\theta_\pi \in \Theta_\pi} |\mathbb{E}_{\mathcal{D}_X}[L_\mathrm{MPC}(x,\pi(x,\theta_\pi))] \\ - \mathbb{E}_\mathcal{S}[L_\mathrm{MPC}(x,\pi(x,\theta_\pi))]| \rightarrow 0
        \end{multline}
        as $N_s \rightarrow \infty$.
        \item Second, we show that the set of optimal empirical parameter estimates converges to the set of expected parameter estimates. That is, we show that with probability 1 it holds that
        \begin{equation}\label{eq: convergence of parameters}
            \mathbb{D}(\Theta^{\mathcal{D}_X}_\pi,\Theta^\mathcal{S}_\pi) \xrightarrow[]{N_s \rightarrow \infty} 0.
        \end{equation}
        \item Next, we link the set of expected parameter estimates with the set of optimal parameters by deriving that
        \begin{equation}\label{eq: equality of parameters}
             \Theta^\mathcal{S}_\pi = \Theta^*_\pi(\mathcal{S}).
        \end{equation}
        \item Based on this, we can link $\Theta^{\mathcal{D}_X}_\pi$ with $\Theta^*_\pi$, showing that with probability 1 it holds that
        \begin{equation}
            \mathbb{D}(\Theta^{\mathcal{D}_X}_\pi,\Theta^*_\pi(\mathcal{S}))\xrightarrow[]{N_s \rightarrow \infty} 0.
        \end{equation}
        \item Finally, we conclude that step 4. implies the desired result. That is that for any $\hat{\theta}_\pi \in  \Theta^{\mathcal{D}_X}_\pi$ and all $x \in \mathcal{S}$, with probability 1 it holds that 
        \begin{equation}
            \mathbb{d}(\pi(x,\hat{\theta}_\pi),\Pi^*(x) \xrightarrow[]{N_s \rightarrow \infty} 0.
        \end{equation}
    \end{enumerate}\par
    \textit{Measurability:}
    By \cref{ass: continuity}, that is, continuity, $\pi(x,\theta_\pi)$ and  $V(x,\hat{\theta}_V)$ are measurable with respect to $x$. Consequently, we have that $L_\mathrm{MPC}(x,\pi(x,\theta_\pi))$ is continuous and measurable with respect to $x$.  Furthermore, since by Assumption~\ref{ass: continuity} $\pi(x,\theta_\pi)$ and therefore $L_\mathrm{MPC}(x,\pi(x,\theta_\pi))$ are continuous in $\theta_\pi$, we have that $L_\mathrm{MPC}(x,\pi(x,\theta_\pi))$ is a random lower semi-continuous function~\cite[Definition 7.35]{stochastic_programming}. This allows to apply \cite[Theorem 7.37]{stochastic_programming}, yielding that $\mathbb{E}_{\mathcal{D}_X}[L_\mathrm{MPC}(x,\pi(x,\theta_\pi))]$ and $\Theta^{\mathcal{D}_X}_\pi$ are measurable.\par
    \textit{Well-definedness:} Well-definedness of $\mathbb{E}_\mathcal{S}[L_\mathrm{MPC}(x,\pi(x,\theta_\pi))]$ follows from the distribution  $\mathcal{P}$ generating $\mathcal{D}_X$ being well-defined. Finally, to conclude that $\mathbb{E}_\mathcal{S}[L_\mathrm{MPC}(x,\pi(x,\theta_\pi))]$ is finite valued, we use compactness of $\mathcal{S}$ and $\theta_\pi$ together with continuity of $L_\mathrm{MPC}(x,\pi(x,\theta_\pi))$, implying that $L_\mathrm{MPC}(x,\pi(x,\theta_\pi))$ remains bounded for all $x \in \mathcal{S}$ and all $\theta_\pi \in \Theta_\pi$. Since, to obtain the expected IL objective $\mathbb{E}_\mathcal{S}[L_\mathrm{MPC}(x,\pi(x,\theta_\pi))]$, we integrate over the compact domain $\mathcal{S}$, boundedness of $L_\mathrm{MPC}(x,\pi(x,\theta_\pi))$ together with well-definedness of $\mathcal{P}$ implies that $\mathbb{E}_\mathcal{S}[L_\mathrm{MPC}(x,\pi(x,\theta_\pi))]$ is bounded as well.\par
    1) Having measurability and well-definedness established, we continue with deriving \eqref{eq: uniform convergence of risks} by relying on \cite[Theorem 7.48]{stochastic_programming}. To this end, we require that:
    \begin{enumerate}
        \item[(i)] $L_\mathrm{MPC}(x,\pi(x,\theta_\pi))$ is continuous in $x$ and $\theta_\pi$;
        \item[(ii)] $L_\mathrm{MPC}(x,\pi(x,\theta_\pi))$ is dominated by an integrable function;
        \item[(iii)] The samples $x_j$, $j=1, \ldots, N_s$ are iid.
    \end{enumerate}
    Condition (i) holds by Assumption~\ref{ass: continuity}. Similarly, condition (iii) holds by assumption. Hence, it remains to show that (ii) holds, for which we again use boundedness of $L_\mathrm{MPC}(x,\pi(x,\theta_\pi))$. More specifically, boundedness of $L_\mathrm{MPC}(x,\pi(x,\theta_\pi))$ implies that there exists $M>0$ such that for all $x \in \mathcal{S}$ and $\theta_\pi \in \Theta_\pi$ it holds $|L_\mathrm{MPC}(x,\pi(x,\theta_\pi))|\leq M$.
    Consequently, defining the measurable function  $h(x) = M$, we can conclude that for all $x \in \mathcal{S}$ it holds that $L_\mathrm{MPC}(x,\pi(x,\theta_\pi)) \leq h(x)$ and $\int_S h(x)p(x)dx = M < \infty$, and, hence, that $L_\mathrm{MPC}(x,\pi(x,\theta_\pi))$ is indeed bounded by the integrable function $h(x)$. Ultimately, we can apply \cite[Theorem 7.48]{stochastic_programming}, yielding \eqref{eq: uniform convergence of risks}.\par
    2) To establish \eqref{eq: convergence of parameters}, we aim to apply \cite[Theorem 5.3]{stochastic_programming}. To this end, we require a compact set $C \subset \mathbb{R}^{n_p}$ such that:
    \begin{enumerate}
        \item[(iv)]  $\Theta^*_\pi(\mathcal{S})$ is not empty and contained in $C$;
        \item[(v)] $\mathbb{E}_\mathcal{S}[L_\mathrm{MPC}(x,\pi(x,\theta_\pi))]$ is finite valued and continuous on $C$;
        \item[(vi)] $\mathbb{E}_{\mathcal{D}_X}[L_\mathrm{MPC}(x,\pi(x,\theta_\pi))]$ converges to $\mathbb{E}_\mathcal{S}[L_\mathrm{MPC}(x,\pi(x,\theta_\pi))]$ with probability 1 as $N_s \rightarrow \infty$ uniformly in $\theta_\pi \in C$;
        \item[(vii)] With probability 1 for $N_s$ large enough the set of empirical parameter estimates is nonempty and contained in $C$.
    \end{enumerate}
    We show that all the requirements are met for the compact parameter set $\Theta_\pi \subset \mathbb{R}^{n_p}$.
    While (iv) holds true by Assumption~\ref{ass: existence ideal paramter}, for
    (v), we again rely on \cite[Theorem 7.48]{stochastic_programming}, which provides that $\mathbb{E}_\mathcal{S}[L_\mathrm{MPC}(x,\pi(x,\theta_\pi))]$ is finite valued and continuous on $\Theta_\pi$.
    Condition (vi) is provided by \eqref{eq: uniform convergence of risks}.
    Finally, continuity of $\mathbb{E}_{\mathcal{D}_X}[L_\mathrm{MPC}(x,\pi(x,\theta_\pi))]$ and compactness of $\Theta_\pi$ imply that $\Theta_\pi^{\mathcal{D}_X} \subseteq \Theta_\pi$ is nonempty, thus ensuring (vii).
    This allows to apply \cite[Theorem 5.3]{stochastic_programming} with $C =\Theta_\pi$, providing~\eqref{eq: convergence of parameters}.\par
    3) Next, we show that $\Theta^\mathcal{S}_\pi = \Theta^*_\pi(\mathcal{S})$. To this end, we consecutively show $\Theta^*_\pi(\mathcal{S}) \subseteq \Theta^\mathcal{S}_\pi$ and $ \Theta^*_\pi(\mathcal{S}) \supseteq \Theta^\mathcal{S}_\pi$, implying $\Theta^\mathcal{S}_\pi = \Theta^*_\pi(\mathcal{S})$.
    By definition of $\Theta^*_\pi(\mathcal{S})$, for all $x \in \mathcal{S}$, $\theta^*_\pi \in \Theta^*_\pi(\mathcal{S})$, and all $\theta_\pi \in \Theta_\pi$, we have that
    \begin{equation}
        L_\mathrm{MPC}(x,\pi(x,\theta^*_\pi)) \leq L_\mathrm{MPC}(x,\pi(x,\theta_\pi)).
    \end{equation}
    By monotonicity of the $\mathbb{E}$ operator, for all $x \in \mathcal{S}$, $\theta^*_\pi \in \Theta^*_\pi(\mathcal{S})$, and all $\theta_\pi \in \Theta_\pi$, we get that 
    \begin{equation}
        \mathbb{E}_\mathcal{S}[L_\mathrm{MPC}(x,\pi(x,\theta^*_\pi))] \leq \mathbb{E}_\mathcal{S}[L_\mathrm{MPC}(x,\pi(x,\theta_\pi))].
    \end{equation}
    Hence, for any $\theta^*_\pi \in \Theta^*_\pi(\mathcal{S})$ we have that $\theta^*_\pi$ minimizes $\mathbb{E}_\mathcal{S}[L_\mathrm{MPC}(x,\pi(x,\theta_\pi))$, i.e., $\theta^*_\pi \in \Theta^\mathcal{S}_\pi$. This establishes the assertion $\Theta^*_\pi(\mathcal{S}) \subseteq \Theta^\mathcal{S}_\pi$.
    Next, we show that $\Theta^\mathcal{S}_\pi \subseteq \Theta^*_\pi(\mathcal{S})$, which we do by contradiction. To this end, suppose  $\Theta^\mathcal{S}_\pi \not \subseteq \Theta^*_\pi(\mathcal{S})$ and consider any $\bar{\theta}_\pi \in \Theta^\mathcal{S}_\pi$ such that $\bar{\theta}_\pi \neq \Theta^*_\pi(\mathcal{S})$. Then, for all $\theta^*_\pi \in \Theta^*_\pi(\mathcal{S})$ this implies that
    \begin{equation}
        L_\mathrm{MPC}(x,\pi(x,\theta^*_\pi)) < L_\mathrm{MPC}(x, \pi(x,\bar{\theta}_\pi))
    \end{equation}
    for at least one $x \in \mathcal{S}$. Taking the expectation over $\mathcal{S}$, together with the fact that $p(x)>0$ for all $x \in \mathcal{S}$, we obtain that for all $\theta^*_\pi \in \Theta^*_\pi(\mathcal{S})$ it holds that
    \begin{equation}
       \mathbb{E}_\mathcal{S}[L_\mathrm{MPC}(x,\pi(x,\theta^*_\pi)) < \mathbb{E}_\mathcal{S}[L_\mathrm{MPC}(x,\pi(x,\bar{\theta}_\pi)).
    \end{equation}
    This, however, contradicts the assertion $\bar{\theta}_\pi \in \Theta^\mathcal{S}_\pi$, since by definition of $\Theta^\mathcal{S}_\pi$ for all $\theta_\pi \in \Theta_\pi$ it holds that $\mathbb{E}_\mathcal{S}[L_\mathrm{MPC}(x,\pi(x,\bar{\theta}_\pi))\leq \mathbb{E}_\mathcal{S}[L_\mathrm{MPC}(x,\pi(x,\theta_\pi))$. This implies that $\bar{\theta}_\pi \in \Theta^*_\pi(\mathcal{S})$. Since $\bar{\theta}_\pi \in \Theta^\mathcal{S}_\pi$ was arbitrary, we can conclude $\Theta^\mathcal{S}_\pi \subseteq \Theta^*_\pi(\mathcal{S})$.
    Consequently, we have that $\Theta^\mathcal{S}_\pi \subseteq \Theta^*_\pi(\mathcal{S})$ and $\Theta^*_\pi(\mathcal{S}) \supseteq \Theta^\mathcal{S}_\pi $, implying that $\Theta^*_\pi(\mathcal{S}) = \Theta^\mathcal{S}_\pi$.\par
    4) Combining~\eqref{eq: convergence of parameters} and~\eqref{eq: equality of parameters} yields that
    \begin{equation}\label{eq: convergence empirical ideal parameters}
        \mathbb{D}(\Theta^{\mathcal{D}_X}_\pi,\Theta^*_\pi(\mathcal{S}))\xrightarrow[]{N_s \rightarrow \infty} 0
    \end{equation}
    with probability 1.\par 
    5) Since $\pi(x,\theta_\pi)$ is continuous in $x$ and $\theta_\pi$ on the compact domain $\mathcal{S} \times \Theta_\pi$, we have that $\pi(x,\theta_\pi)$ is uniformly continuous. This implies that for any $\hat{\theta}_\pi \in \Theta^{\mathcal{D}_X}_\pi$ and all $x \in \mathcal{S}$ we get that
    \begin{equation}
        \mathbb{d}(\pi(x,\hat{\theta}_\pi),\Pi^*(x)) \xrightarrow[]{N_s \rightarrow \infty} 0
    \end{equation}
    with probability 1, which concludes the proof.}{}

\end{document}